\providecommand{\U}[1]{\protect\rule{.1in}{.1in}}
\newtheorem{theorem}{Theorem}
\newtheorem{corollary}[theorem]{Corollary}
\newtheorem{proposition}[theorem]{Proposition}
\newenvironment{proof}[1][Proof]{\noindent\textbf{#1.} }{\ \rule{0.5em}{0.5em}}
\begin{document}

\title{Optimal uniform continuity bound for conditional entropy of classical--quantum states}
\author{Mark M.\ Wilde\thanks{Hearne Institute for Theoretical Physics, Department of
Physics and Astronomy, Center for Computation and Technology, Louisiana State
University, Baton Rouge, Louisiana 70803, USA. Email: mwilde@lsu.edu}}
\maketitle

\begin{abstract}
In this short note, I show how a recent result of Alhejji and Smith
[arXiv:1909.00787] regarding an optimal uniform continuity bound for classical
conditional entropy leads to an optimal uniform continuity bound for quantum
conditional entropy of classical--quantum states. The bound is optimal in the
sense that there always exists a pair of classical--quantum states saturating
the bound, and so no further improvements are possible. An immediate
application is a uniform continuity bound for entanglement of formation that
improves upon the one previously given by Winter in [arXiv:1507.07775]. Two
intriguing open questions are raised regarding other possible uniform
continuity bounds for conditional entropy, one about quantum--classical states
and another about fully quantum bipartite states.

\end{abstract}

Recently, the following bound was established by Alhejji and Smith in
\cite{AS19} for $\varepsilon\in(0,1-1/\left\vert \mathcal{Y}\right\vert ]$:%
\begin{equation}
\left\vert H(Y|X)_{p}-H(Y|X)_{q}\right\vert \leq\varepsilon\log_{2}(\left\vert
\mathcal{Y}\right\vert -1)+h_{2}(\varepsilon),\label{eq:al-sm}%
\end{equation}
where $h_{2}(\varepsilon):=-\varepsilon\log_{2}\varepsilon-\left(
1-\varepsilon\right)  \log_{2}(1-\varepsilon)$ is the binary entropy, $p_{XY}$
and $q_{XY}$ are joint probability distributions over the finite-cardinality
alphabets $\mathcal{X}$ and $\mathcal{Y}$,
\begin{equation}
H(Y|X)_{p}:=-\sum_{x\in\mathcal{X}}p_{X}(x)\sum_{y\in\mathcal{Y}}%
p_{Y|X}(x)\log_{2}p_{Y|X}(y|x)
\end{equation}
and $H(Y|X)_{q}$ (defined in a similar way but with $q_{XY}$) are conditional
Shannon entropies, and
\begin{equation}
\varepsilon\geq\frac{1}{2}\Vert p_{XY}-q_{XY}\Vert_{1}:=\frac{1}{2}\sum
_{x\in\mathcal{X},y\in\mathcal{Y}}\left\vert p_{XY}(x,y)-q_{XY}%
(x,y)\right\vert .
\end{equation}
The quantity on the right-hand side is known as the total variational distance of
the probability distributions $p_{XY}$ and $q_{XY}$, and it is a measure of
their statistical distinguishability. The bound in \eqref{eq:al-sm} is called
a uniform continuity bound because the right-hand side depends only on
$\varepsilon$ and the cardinality~$\left\vert \mathcal{Y}\right\vert $. It is
optimal in the sense that for every $\varepsilon$ and $\left\vert
\mathcal{Y}\right\vert $, there exists a pair of distributions $p_{XY}$ and
$q_{XY}$ saturating the upper bound (see Eqs.~(27)--(28) of \cite{AS19}). It
generalizes the optimal uniform continuity bound for unconditional Shannon
entropy established independently by \cite[Eq.~(4)]{Z07}\ and~\cite{A07}.

Uniform continuity bounds of the form in \eqref{eq:al-sm} for both the
classical and quantum cases find application in providing estimates for
various communication capacities of classical and quantum channels
\cite{LS08,SSWR15,PhysRevLett.120.160503,LKDW18,KW17,SWAT18,KSW19,KGW19}.
Motivated by this application (as well as fundamental concerns), there has
been a large amount of work on this topic over the years
\cite{Fannes73,AF04,A07,Winter15,Shi15,Sh18,Shirokov_2018,Sh19}.

In this brief note, I show how to employ the bound in \eqref{eq:al-sm} to
establish the following optimal uniform continuity bound for conditional
entropy of finite-dimensional classical--quantum states, improving (optimally)
upon one of the cases given in Lemma~2 of \cite{Winter15}:

\begin{proposition}
\label{prop:main}The following inequality holds for $\varepsilon
\in(0,1-1/d_{B}]$:%
\begin{equation}
\left\vert H(B|X)_{\rho}-H(B|X)_{\sigma}\right\vert \leq\varepsilon\log
_{2}(d_{B}-1)+h_{2}(\varepsilon), \label{eq:cont-cq}%
\end{equation}
where $d_{B}$ is the dimension of system $B$, the states $\rho_{XB}$ and
$\sigma_{XB}$ are the following finite-dimensional classical--quantum states:%
\begin{equation}
\sum_{x\in\mathcal{X}}r(x)|x\rangle\langle x|_{X}\otimes\rho_{B}^{x}%
,\qquad\sum_{x\in\mathcal{X}}s(x)|x\rangle\langle x|_{X}\otimes\sigma_{B}^{x},
\label{eq:cq1}%
\end{equation}
$r(x)$ and $s(x)$ are probability distributions, $\left\{  \rho_{B}%
^{x}\right\}  _{x}$ and $\left\{  \sigma_{B}^{x}\right\}  _{x}$ are sets of
states, the conditional entropy is defined in terms of the von Neumann entropy
as $H(B|X)_{\rho}:=\sum_{x}r(x)H(\rho_{B}^{x})$, and%
\begin{equation}
\varepsilon\geq\frac{1}{2}\left\Vert \rho_{XB}-\sigma_{XB}\right\Vert _{1}.
\end{equation}
Also, there exists a pair of classical--quantum states saturating the bound
for every value of $d_B$ and $\varepsilon\in(0,1-1/d_{B}]$.
\end{proposition}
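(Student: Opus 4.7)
My plan is to reduce the quantum continuity bound for classical--quantum states to the Alhejji--Smith bound \eqref{eq:al-sm} for classical conditional entropy. For each $x\in\mathcal{X}$, let $\lambda^{\downarrow,x}\in\mathbb{R}^{d_B}$ denote the eigenvalues of $\rho_B^x$ arranged in non-increasing order, and let $\mu^{\downarrow,x}$ denote the analogous list for $\sigma_B^x$. Define joint probability distributions on $\mathcal{X}\times\{1,\ldots,d_B\}$ by
\begin{equation}
p_{XY}(x,y):=r(x)\,\lambda^{\downarrow,x}_y,\qquad q_{XY}(x,y):=s(x)\,\mu^{\downarrow,x}_y.
\end{equation}
Since the von Neumann entropy is unitarily invariant, $H(\rho_B^x)=H(\lambda^{\downarrow,x})$, and the definition of the cq conditional entropy then gives $H(B|X)_{\rho}=H(Y|X)_{p}$; the analogous identity holds for $\sigma_{XB}$ and $q_{XY}$. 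Therefore the left-hand side of \eqref{eq:cont-cq} exactly equals the left-hand side of \eqref{eq:al-sm} for the induced distributions, with $|\mathcal{Y}|=d_B$.

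The technical heart of the reduction is to verify that $\tfrac12\|p_{XY}-q_{XY}\|_1\leq\varepsilon$. Because $\rho_{XB}$ and $\sigma_{XB}$ are block-diagonal on $X$, the trace norm decomposes as $\|\rho_{XB}-\sigma_{XB}\|_1=\sum_x\|r(x)\rho_B^x-s(x)\sigma_B^x\|_1$. I then apply Mirsky's inequality (the Hoffman--Wielandt-type bound for the Schatten $1$-norm) to each block: for Hermitian operators $A,B$ with sorted eigenvalue lists $a^{\downarrow},b^{\downarrow}$, one has $\sum_i|a_i^{\downarrow}-b_i^{\downarrow}|\leq\|A-B\|_1$. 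Applied to $A=r(x)\rho_B^x$ and $B=s(x)\sigma_B^x$ and summed over $x$, this gives $\|p_{XY}-q_{XY}\|_1\leq\|\rho_{XB}-\sigma_{XB}\|_1\leq 2\varepsilon$, so \eqref{eq:al-sm} delivers \eqref{eq:cont-cq} immediately. The step most worth double-checking is the use of Mirsky's inequality, because it is the only point where the quantum nature of the problem intervenes; the rest of the argument is bookkeeping.

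For the saturation claim, I take a pair of distributions $(p^{*}_{XY},q^{*}_{XY})$ on $\mathcal{X}\times\{1,\ldots,d_B\}$ that saturates \eqref{eq:al-sm}, as explicitly constructed in Eqs.~(27)--(28) of \cite{AS19}, and embed them as commuting classical--quantum (in fact classical--classical) states
\begin{equation}
\rho^{*}_{XB}:=\sum_{x,y}p^{*}(x,y)|x\rangle\langle x|_X\otimes|y\rangle\langle y|_B,\qquad \sigma^{*}_{XB}:=\sum_{x,y}q^{*}(x,y)|x\rangle\langle x|_X\otimes|y\rangle\langle y|_B.
\end{equation}
Since these two states are simultaneously diagonal in the product basis $\{|x\rangle|y\rangle\}$, their trace distance equals the total variational distance of the underlying joint distributions, and the cq conditional von Neumann entropy of each state equals the corresponding classical conditional Shannon entropy. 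Hence equality in \eqref{eq:al-sm} lifts to equality in \eqref{eq:cont-cq} for every $d_B$ and every $\varepsilon\in(0,1-1/d_B]$.
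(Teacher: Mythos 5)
Your proof is correct, but it takes a genuinely different route from the paper's. The paper reduces to the classical bound \eqref{eq:al-sm} via a \emph{conditional dephasing channel} that pinches system $B$ in the eigenbasis of $\rho_B^x$ conditioned on the value $x$ in system $X$: this channel leaves $\rho_{XB}$ invariant, can only increase the entropy of $\sigma_{XB}$ because it is unital, and contracts trace distance by data processing; after assuming without loss of generality that $H(B|X)_{\rho}\leq H(B|X)_{\sigma}$, the resulting one-sided inequalities chain together. You instead pass to the \emph{sorted} spectra of both families of states and control $\tfrac12\Vert p_{XY}-q_{XY}\Vert_1$ blockwise via Mirsky's inequality $\sum_i\vert a_i^{\downarrow}-b_i^{\downarrow}\vert\leq\Vert A-B\Vert_1$ for Hermitian $A,B$. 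Both arguments are sound. Yours has the aesthetic advantage of yielding exact identities $H(B|X)_{\rho}=H(Y|X)_{p}$ and $H(B|X)_{\sigma}=H(Y|X)_{q}$, so no ordering assumption and no one-sided entropy inequality are needed; the price is reliance on Mirsky's (Lidskii--Wielandt type) eigenvalue perturbation bound, a nontrivial matrix-analytic input, whereas the paper's argument uses only two standard quantum-information facts (unital channels do not decrease entropy, and trace distance is monotone under channels). The step you flagged is indeed the one to be careful about: the inequality holds precisely because both eigenvalue lists are sorted in the same order (and because $r(x),s(x)\geq 0$, so the sorted eigenvalues of $r(x)\rho_B^x$ really are $r(x)\lambda_y^{\downarrow,x}$); with an arbitrary basis-dependent pairing of eigenvalues the bound would fail. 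Your saturation argument coincides with the paper's.
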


\begin{proof}
The desired inequality is reduced to the classical case by means of a
conditional dephasing channel and data processing. This generalizes an
approach recalled in the introduction of \cite{Winter15}, which is attributed
therein to \cite{Petz08}. Suppose without loss of generality that
$H(B|X)_{\rho}\leq H(B|X)_{\sigma}$. Let a spectral decomposition of $\rho
_{B}^{x}$ be as follows:%
\begin{equation}
\rho_{B}^{x}=\sum_{y}r(y|x)|\phi^{y,x}\rangle\langle\phi^{y,x}|_{B},
\end{equation}
where $r(y|x)$ is a conditional probability distribution and $\left\{
|\phi^{y,x}\rangle_{B}\right\}  _{y}$ is a set of orthonormal states (for
fixed $x$). Define the conditional dephasing channel as%
\begin{equation}
\overline{\Delta}_{XB}^{\text{cd}}(\omega_{XB})=\sum_{x,y}\left(
|x\rangle\langle x|_{X}\otimes|\phi^{y,x}\rangle\langle\phi^{y,x}|_{B}\right)
\omega_{XB}\left(  |x\rangle\langle x|_{X}\otimes|\phi^{y,x}\rangle\langle
\phi^{y,x}|_{B}\right)  ,
\end{equation}
which we think of intuitively as dephasing or measuring system $X$ and then
based on the outcome, dephasing system $B$ in the eigenbasis of $\rho_{B}^{x}%
$. This is a unital channel, and so the entropy of any state on systems $X$
and $B$ does not decrease under its action. When this conditional dephasing
acts on $\sigma_{XB}$, it leads to the following state:%
\begin{equation}
\overline{\Delta}_{XB}^{\text{cd}}(\sigma_{XB})=\sum_{x\in\mathcal{X}%
,y\in\mathcal{Y}}s(x)s(y|x)|x\rangle\langle x|_{X}\otimes|\phi^{y,x}%
\rangle\langle\phi^{y,x}|_{B},
\end{equation}
where $s(y|x)$ is a conditional probability distribution and $\mathcal{Y}$ is
an alphabet with the same cardinality as the dimension $d_{B}$: $\left\vert
\mathcal{Y}\right\vert =d_{B}$. Observe that%
\begin{equation}
\sigma_{X}=\operatorname{Tr}_{B}[\sigma_{XB}]=\operatorname{Tr}_{B}%
[\overline{\Delta}_{XB}^{\text{cd}}(\sigma_{XB})].\label{sx-inv}%
\end{equation}
Furthermore, the state $\rho_{XB}$ is invariant under the action of the
conditional dephasing channel:%
\begin{equation}
\rho_{XB}=\overline{\Delta}_{XB}^{\text{cd}}(\rho_{XB}).
\end{equation}
Observe that $\rho_{XB}$ and $\overline{\Delta}_{XB}^{\text{cd}}(\sigma_{XB})$
are commuting states, and thus can be considered as classical--classical
states (to be more precise, the first is classical and the second is classical
conditioned on the classical value in the first system). Define the joint
distributions $r_{XY}(x,y)=r(x)r(y|x)$ and $s_{XY}(x,y)=s(x)s(y|x)$. From
\eqref{sx-inv}\ and the fact that the conditional dephasing channel is unital,
it follows that%
\begin{align}
H(B|X)_{\sigma} &  =H(BX)_{\sigma}-H(X)_{\sigma}\\
&  =H(BX)_{\sigma}-H(X)_{\overline{\Delta}^{cd}(\sigma)}\\
&  \leq H(BX)_{\overline{\Delta}^{\text{cd}}(\sigma)}-H(X)_{\overline{\Delta
}^{\text{cd}}(\sigma)}\\
&  =H(B|X)_{\overline{\Delta}^{\text{cd}}(\sigma)}\\
&  =H(Y|X)_{s}.
\end{align}
So we have that%
\begin{equation}
H(Y|X)_{r}=H(B|X)_{\rho}\leq H(B|X)_{\sigma}\leq H(Y|X)_{s},
\end{equation}
which means that%
\begin{equation}
H(B|X)_{\sigma}-H(B|X)_{\rho}\leq H(Y|X)_{s}-H(Y|X)_{r}.
\end{equation}
Meanwhile, we have from data processing for normalized trace distance that%
\begin{align}
\frac{1}{2}\left\Vert \rho_{XB}-\sigma_{XB}\right\Vert _{1} &  \geq\frac{1}%
{2}\left\Vert \overline{\Delta}_{XB}^{\text{cd}}(\rho_{XB})-\overline{\Delta
}_{XB}^{\text{cd}}(\sigma_{XB})\right\Vert _{1}\\
&  =\frac{1}{2}\left\Vert \rho_{XB}-\overline{\Delta}_{XB}^{\text{cd}}%
(\sigma_{XB})\right\Vert _{1}\\
&  =\frac{1}{2}\left\Vert r_{XY}-s_{XY}\right\Vert _{1}.
\end{align}
In turn, this means that the following bound holds for total variational
distance:
\begin{equation}
\frac{1}{2}\left\Vert r_{XY}-s_{XY}\right\Vert _{1}\leq\varepsilon.
\end{equation}
Now we have completed the reduction to the classical case and invoke
\eqref{eq:al-sm} to conclude that%
\begin{align}
\left\vert H(B|X)_{\rho}-H(B|X)_{\sigma}\right\vert  &  =H(B|X)_{\sigma
}-H(B|X)_{\rho}\\
&  \leq H(Y|X)_{s}-H(Y|X)_{r}\\
&  \leq\varepsilon\log_{2}(d_{B}-1)+h_{2}(\varepsilon),
\end{align}
completing the proof of \eqref{eq:cont-cq}. The inequality in
\eqref{eq:cont-cq} is seen to be tight by using the classical example from
Eqs.~(27)--(28) of \cite{AS19}.
\end{proof}

\bigskip

By employing the same method of proof given for Corollary~4 in \cite{Winter15} (and observing that $\delta = \sqrt
{\varepsilon\left(  2-\varepsilon\right)}$ and $\delta \in (0,1-1/d]$ imply that $\varepsilon\in(0,1-\frac{\sqrt{2d-1}}{d}]$), we arrive at the following uniform continuity bound for entanglement of formation:

\begin{corollary}
Let $\rho_{AB}$ and $\sigma_{AB}$ be finite-dimensional quantum states such
that%
\begin{equation}
\frac{1}{2}\left\Vert \rho_{AB}-\sigma_{AB}\right\Vert _{1}\leq\varepsilon,
\end{equation}
where $\varepsilon\in(0,1-\frac{\sqrt{2d-1}}{d}]$ and $d=\min\left\{  d_{A},d_{B}\right\}  $.
Then%
\begin{equation}
\left\vert E_{F}(\rho_{AB})-E_{F}(\sigma_{AB})\right\vert \leq\delta\log
_{2}(d-1)+h_{2}(\delta),
\end{equation}
where $E_{F}$ is the entanglement of formation and $\delta=\sqrt
{\varepsilon\left(  2-\varepsilon\right)  }$. The entanglement of formation of
a state $\omega_{AB}$ is defined as follows \cite{BDSW96}:
\begin{multline}
E_{F}(\omega_{AB}):=\\
\inf\{H(B|X)_{\tau}:\tau_{XAB}=\sum_{x}p(x)|x\rangle\langle x|_{X}\otimes
\phi_{AB}^{x},\operatorname{Tr}_{X}[\tau_{XAB}]=\omega_{AB}\}.
\end{multline}
where each $\phi_{AB}^{x}$ is a pure state and $p(x)$ is a probability distribution.
\end{corollary}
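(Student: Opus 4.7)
The plan is to follow the purification-plus-Uhlmann technique used in the proof of Corollary~4 of \cite{Winter15}, substituting Proposition~\ref{prop:main} for the c-q continuity bound employed there. The factor $\delta=\sqrt{\varepsilon(2-\varepsilon)}$ appears because the trace-distance constraint on the mixed states $\rho_{AB},\sigma_{AB}$ only translates, via the Fuchs--van de Graaf inequalities and Uhlmann's theorem, into a trace-distance constraint of order $\delta$ on purifications, which are what control the c-q extensions in the infimum defining $E_F$.

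For any $\eta>0$, I would first fix a decomposition $\tau^{\rho}_{XAB}=\sum_x p(x)|x\rangle\langle x|_X\otimes\phi^{x}_{AB}$ of $\rho_{AB}$ into pure states $\phi^{x}_{AB}$ with $H(B|X)_{\tau^\rho}\leq E_F(\rho_{AB})+\eta$. Setting $|\zeta^\rho\rangle_{XX'AB}:=\sum_x \sqrt{p(x)}|x\rangle_X|x\rangle_{X'}|\phi^x\rangle_{AB}$ gives a purification of $\rho_{AB}$ on an enlarged classical register. By Fuchs--van de Graaf, $F(\rho_{AB},\sigma_{AB})\geq 1-\varepsilon$; Uhlmann's theorem then supplies a purification $|\zeta^\sigma\rangle_{XX'AB}$ of $\sigma_{AB}$ with the same lower bound on fidelity with $|\zeta^\rho\rangle$, and the reverse Fuchs--van de Graaf inequality gives $\tfrac12\bigl\||\zeta^\rho\rangle\langle\zeta^\rho|-|\zeta^\sigma\rangle\langle\zeta^\sigma|\bigr\|_1\leq\sqrt{1-(1-\varepsilon)^2}=\delta$. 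Measuring $XX'$ in the computational basis produces c-q states $\widetilde{\tau}^\rho_{XX'AB}$ and $\widetilde{\tau}^\sigma_{XX'AB}$, each with pure conditional $AB$ states (since they arise from projecting pure states onto rank-one projectors on $XX'$) and satisfying $\tfrac12\|\widetilde{\tau}^\rho-\widetilde{\tau}^\sigma\|_1\leq\delta$ by contractivity. Since $\widetilde{\tau}^\rho$ agrees with $\tau^\rho$ after relabeling while $\widetilde{\tau}^\sigma$ is a valid c-q extension of $\sigma_{AB}$, one has $H(B|XX')_{\widetilde{\tau}^\rho}\leq E_F(\rho_{AB})+\eta$ and $E_F(\sigma_{AB})\leq H(B|XX')_{\widetilde{\tau}^\sigma}$. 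Applying Proposition~\ref{prop:main} and then letting $\eta\to 0$ yields one direction of the bound; swapping the roles of $\rho$ and $\sigma$ gives the other.

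Two points warrant attention. To obtain $d=\min\{d_A,d_B\}$ rather than $d_B$ in the continuity bound, I would use that the conditional $AB$ states of $\widetilde{\tau}^\rho$ and $\widetilde{\tau}^\sigma$ are pure, so $H(B|XX')$ equals $H(A|XX')$ for both; Proposition~\ref{prop:main} can therefore be applied to whichever of $A$ or $B$ has smaller dimension. The stated constraint $\varepsilon\in(0,1-\sqrt{2d-1}/d]$ is exactly the range in which $\delta$ lies in $(0,1-1/d]$, the regime where Proposition~\ref{prop:main} applies. No step appears to present a serious obstacle; the whole argument is a direct translation of Winter's scheme with the sharper c-q bound plugged in.
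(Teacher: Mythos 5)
Your proposal is correct and is precisely the argument the paper intends: the paper's ``proof'' is simply the remark that one should employ the method of Corollary~4 of Winter's paper with Proposition~\ref{prop:main} substituted for his classical--quantum continuity bound, which is exactly the purification-plus-Uhlmann scheme you spell out (including the origin of $\delta=\sqrt{\varepsilon(2-\varepsilon)}$, the use of purity of the conditional $AB$ states to obtain $d=\min\{d_A,d_B\}$, and the translation of the range $\delta\in(0,1-1/d]$ into $\varepsilon\in(0,1-\sqrt{2d-1}/d]$). No discrepancies to report.
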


The statement in Proposition~\ref{prop:main}\ has a straightforward
generalization to the case in which the classical conditioning system is
countable (thus addressing an open question stated in \cite{AS19}). To arrive
at the corollary, let us define conditional entropy in this case as follows:%
\begin{equation}
H(B|X)_{\rho}:=\sum_{x\in\mathcal{X}}p_{X}(x)H(\rho_{B}^{x}),
\label{eq:ce-main-formula}%
\end{equation}
where $\rho_{XB}$ has the same form as in \eqref{eq:cq1}, except that
$\mathcal{X}$ is now a countable alphabet (correspondingly, $X$ is now a
separable Hilbert space). Then we have the following corollary:

\begin{corollary}
The following inequality holds for $\varepsilon\in(0,1-1/d_{B}]$:%
\begin{equation}
\left\vert H(B|X)_{\rho}-H(B|X)_{\sigma}\right\vert \leq\varepsilon\log
_{2}(d_{B}-1)+h_{2}(\varepsilon),
\end{equation}
where $d_{B}$ is the dimension of system $B$, the states $\rho_{XB}$ and
$\sigma_{XB}$ are the following classical--quantum states:%
\begin{equation}
\sum_{x\in\mathcal{X}}r(x)|x\rangle\langle x|_{X}\otimes\rho_{B}^{x}%
,\qquad\sum_{x\in\mathcal{X}}s(x)|x\rangle\langle x|_{X}\otimes\sigma_{B}^{x},
\label{eq:cq-state-cor}%
\end{equation}
with system $B$ finite-dimensional and the alphabet $\mathcal{X}$ countable,
$r(x)$ and $s(x)$ are probability distributions, $\left\{  \rho_{B}%
^{x}\right\}  _{x}$ and $\left\{  \sigma_{B}^{x}\right\}  _{x}$ are sets of
states, and%
\begin{equation}
\varepsilon\geq\frac{1}{2}\left\Vert \rho_{XB}-\sigma_{XB}\right\Vert _{1}.
\label{eq:TD-assumption}%
\end{equation}

\end{corollary}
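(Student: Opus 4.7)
The plan is to reduce the countable-alphabet case to the finite-alphabet case already handled by Proposition~\ref{prop:main} via a direct truncation argument. Fix an increasing exhaustion $\mathcal{X}_{1}\subset\mathcal{X}_{2}\subset\cdots$ of $\mathcal{X}$ by finite subsets with $\bigcup_{n}\mathcal{X}_{n}=\mathcal{X}$, set $r_{n}:=\sum_{x\in\mathcal{X}_{n}}r(x)$ and $s_{n}:=\sum_{x\in\mathcal{X}_{n}}s(x)$, fix an arbitrary reference state $\tau_{B}$, and introduce a fresh symbol $\ast\notin\mathcal{X}$. Define the truncated classical--quantum states
\begin{align}
\rho_{XB}^{(n)} &:= \sum_{x\in\mathcal{X}_{n}}r(x)|x\rangle\langle x|_{X}\otimes\rho_{B}^{x} + (1-r_{n})|\ast\rangle\langle\ast|_{X}\otimes\tau_{B},\\
\sigma_{XB}^{(n)} &:= \sum_{x\in\mathcal{X}_{n}}s(x)|x\rangle\langle x|_{X}\otimes\sigma_{B}^{x} + (1-s_{n})|\ast\rangle\langle\ast|_{X}\otimes\tau_{B},
\end{align}
both of which are finite-dimensional classical--quantum states on the finite alphabet $\mathcal{X}_{n}\cup\{\ast\}$ and the $d_{B}$-dimensional system $B$, so that Proposition~\ref{prop:main} applies directly to the pair $(\rho^{(n)},\sigma^{(n)})$.

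The first key step is to verify that truncation does not increase the trace distance. The block-diagonal structure in $X$ yields
\begin{equation}
\tfrac{1}{2}\|\rho_{XB}^{(n)}-\sigma_{XB}^{(n)}\|_{1} = \sum_{x\in\mathcal{X}_{n}}\tfrac{1}{2}\|r(x)\rho_{B}^{x}-s(x)\sigma_{B}^{x}\|_{1} + \tfrac{1}{2}|r_{n}-s_{n}|,
\end{equation}
and combining this with the bound $|r_{n}-s_{n}|=|\sum_{x\notin\mathcal{X}_{n}}(s(x)-r(x))|\leq\sum_{x\notin\mathcal{X}_{n}}\|r(x)\rho_{B}^{x}-s(x)\sigma_{B}^{x}\|_{1}$ (the triangle inequality together with contractivity of the trace under partial trace applied term by term) and the analogous block decomposition of $\rho_{XB}-\sigma_{XB}$ itself shows that
\begin{equation}
\varepsilon_{n}:=\tfrac{1}{2}\|\rho_{XB}^{(n)}-\sigma_{XB}^{(n)}\|_{1}\leq\tfrac{1}{2}\|\rho_{XB}-\sigma_{XB}\|_{1}\leq\varepsilon.
\end{equation}

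The second step is convergence of the conditional entropies. Because $0\leq H(\omega)\leq\log_{2}d_{B}$ for every state on $B$, the series defining $H(B|X)_{\rho}$ in \eqref{eq:ce-main-formula} converges absolutely with summable tail, and
\begin{equation}
|H(B|X)_{\rho^{(n)}}-H(B|X)_{\rho}| \leq 2(1-r_{n})\log_{2}d_{B} \xrightarrow{n\to\infty} 0,
\end{equation}
with the analogous statement for $\sigma$. Proposition~\ref{prop:main} applied to $(\rho^{(n)},\sigma^{(n)})$ gives $|H(B|X)_{\rho^{(n)}}-H(B|X)_{\sigma^{(n)}}|\leq\varepsilon_{n}\log_{2}(d_{B}-1)+h_{2}(\varepsilon_{n})$. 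Since the function $\varepsilon'\mapsto\varepsilon'\log_{2}(d_{B}-1)+h_{2}(\varepsilon')$ has derivative $\log_{2}((d_{B}-1)(1-\varepsilon')/\varepsilon')\geq 0$ on $(0,1-1/d_{B}]$ and is therefore nondecreasing there, the right-hand side is dominated by $\varepsilon\log_{2}(d_{B}-1)+h_{2}(\varepsilon)$, and passing $n\to\infty$ on the left yields the desired inequality.

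The only real obstacle is the bookkeeping behind the trace-distance estimate after truncation---specifically, controlling the mismatch term $|r_{n}-s_{n}|$ by a tail of the same series that upper bounds $\frac{1}{2}\|\rho_{XB}-\sigma_{XB}\|_{1}$. Once that is in hand, everything else is standard limit-taking together with a direct invocation of Proposition~\ref{prop:main}.
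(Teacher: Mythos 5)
Your proof is correct, but it takes a genuinely different route from the paper's. The paper first spends most of its effort verifying that the explicit formula $H(B|X)_{\rho}=\sum_{x}r(x)H(\rho_{B}^{x})$ agrees with the general definition $H(B)_{\rho}-I(X;B)_{\rho}$ of conditional entropy on a separable Hilbert space, and then invokes a limit theorem of Kuznetsova \cite{K11}, namely $\lim_{k}H(B|X)_{\rho^{k}}=H(B|X)_{\rho}$ for the specific projection channel $\mathcal{P}_{X}^{k}$, together with data processing for that channel, before applying Proposition~\ref{prop:main} and passing to the limit. You instead work directly with the defining series \eqref{eq:ce-main-formula}, truncate by hand onto $\mathcal{X}_{n}\cup\{\ast\}$, and control both the trace distance (via the block decomposition and the tail estimate $|r_{n}-s_{n}|\leq\sum_{x\notin\mathcal{X}_{n}}\|r(x)\rho_{B}^{x}-s(x)\sigma_{B}^{x}\|_{1}$) and the entropy error (via $0\leq H\leq\log_{2}d_{B}$ and the summable tail) with elementary estimates. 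Your route is more self-contained and avoids the infinite-dimensional machinery entirely; what it does not buy you is the paper's reassurance that the quantity being bounded coincides with the standard general definition of conditional entropy from \cite{K11}, which is the main content of the paper's lengthy computation. Two small remarks: your truncation map is itself a channel (pinch onto $\mathcal{X}_{n}$ and route the residual mass to $|\ast\rangle\langle\ast|\otimes\tau_{B}$), so the trace-distance inequality also follows in one line from data processing; and the monotonicity argument for $\varepsilon'\mapsto\varepsilon'\log_{2}(d_{B}-1)+h_{2}(\varepsilon')$ is unnecessary, since Proposition~\ref{prop:main} only requires $\varepsilon\geq\frac{1}{2}\|\rho_{XB}^{(n)}-\sigma_{XB}^{(n)}\|_{1}$ and can therefore be applied with the original $\varepsilon$ directly---which also sidesteps the edge case $\varepsilon_{n}=0$, where the proposition as stated would not apply.
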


\begin{proof}
Recall that the conditional entropy of a bipartite state $\rho_{LM}$\ acting
on a separable Hilbert space, with $H(L)_{\rho}<\infty$, is defined as
\cite{K11}%
\begin{equation}
H(L|M)_{\rho}:=H(L)_{\rho}-I(L;M)_{\rho},\label{eq:ce-in-terms-of-ent-mi}%
\end{equation}
where the mutual information is given in terms of the relative entropy
$D(\omega\Vert\tau)$ \cite{F70,Lindblad1973}\ of states\ $\omega$ and $\tau$
as%
\begin{align}
I(L;M)_{\rho} &  :=D(\rho_{LM}\Vert\rho_{L}\otimes\rho_{M}),\label{eq:MI}\\
D(\omega\Vert\tau) &  :=\frac{1}{\ln2}\sum_{x,y}\left\vert \langle\phi
_{x}|\psi_{y}\rangle\right\vert ^{2}\left[  \lambda_{x}\ln(\lambda_{x}/\mu
_{y})+\mu_{y}-\lambda_{x}\right]  ,\label{eq:rel-ent}%
\end{align}
and spectral decompositions of states $\omega$ and $\tau$ are given by%
\begin{equation}
\omega=\sum_{x}\lambda_{x}|\phi_{x}\rangle\langle\phi_{x}|,\qquad\tau=\sum
_{y}\mu_{y}|\psi_{y}\rangle\langle\psi_{y}|.
\end{equation}
Let us first verify that the formula in
\eqref{eq:ce-in-terms-of-ent-mi}\ reduces to that in
\eqref{eq:ce-main-formula}. Evaluating the formulas in \eqref{eq:MI} and
\eqref{eq:rel-ent} for the case of interest (the state $\rho_{XB}$ in
\eqref{eq:cq-state-cor}), while taking spectral decompositions of $\rho_{XB}$
and $\rho_{X}\otimes\rho_{B}$ as%
\begin{align}
\rho_{XB} &  =\sum_{x\in\mathcal{X}}r(x)|x\rangle\langle x|_{X}\otimes
\sum_{y\in\mathcal{Y}}r(y|x)|\phi^{y,x}\rangle\langle\phi^{y,x}|_{B},\\
\rho_{X}\otimes\rho_{B} &  =\sum_{x^{\prime}\in\mathcal{X}}r(x^{\prime
})|x^{\prime}\rangle\langle x^{\prime}|_{X}\otimes\sum_{z\in\mathcal{Z}%
}q(z)|\psi_{z}\rangle\langle\psi_{z}|_{B},
\end{align}
with $\mathcal{X}$ countable, $\mathcal{Y}$ and $\mathcal{Z}$ finite, we find
that%
\begin{align}
I(X;B)_{\rho} &  =\frac{1}{\ln2}\sum_{x,y,z,x^{\prime}}\left\vert \left(
\langle x^{\prime}|_{X}\otimes\langle\psi_{z}|_{B}\right)  \left(
|x\rangle_{X}\otimes|\phi^{y,x}\rangle_{B}\right)  \right\vert ^{2}\nonumber\\
&  \qquad\times\left[  r(x)r(y|x)\ln\left(  \frac{r(x)r(y|x)}{\left[
r(x^{\prime})q(z)\right]  }\right)  +r(x^{\prime})q(z)-r(x)r(y|x)\right]  \\
&  =\frac{1}{\ln2}\sum_{x,y,z}\left\vert \langle\psi_{z}|\phi^{y,x}\rangle
_{B}\right\vert ^{2}\nonumber\\
&  \qquad\times\left[  r(x)r(y|x)\ln\left(  \frac{r(x)r(y|x)}{\left[
r(x)q(z)\right]  }\right)  +r(x)q(z)-r(x)r(y|x)\right]  \\
&  =\frac{1}{\ln2}\sum_{x}r(x)\sum_{y,z}\left\vert \langle\psi_{z}|\phi
^{y,x}\rangle_{B}\right\vert ^{2}\left[  r(y|x)\ln\left(  \frac{r(y|x)}%
{q(z)}\right)  +q(z)-r(y|x)\right]
\end{align}
For every $x\in\mathcal{X}$, we find that%
\begin{align}
&  \sum_{y,z}\left\vert \langle\psi_{z}|\phi^{y,x}\rangle_{B}\right\vert
^{2}\left[  r(y|x)\ln\left(  \frac{r(y|x)}{q(z)}\right)  +q(z)-r(y|x)\right]
\\
&  =\sum_{y,z}\left\vert \langle\psi_{z}|\phi^{y,x}\rangle_{B}\right\vert
^{2}\left[  r(y|x)\ln\left(  \frac{r(y|x)}{q(z)}\right)  \right]  \\
&  =\sum_{y,z}\left\vert \langle\psi_{z}|\phi^{y,x}\rangle_{B}\right\vert
^{2}\left[  r(y|x)\ln\left(  r(y|x)\right)  \right]  +\sum_{y,z}\left\vert
\langle\psi_{z}|\phi^{y,x}\rangle_{B}\right\vert ^{2}\left[  r(y|x)\ln
\!\left(  \frac{1}{q(z)}\right)  \right]  \\
&  =\sum_{y}\left[  r(y|x)\ln\left(  r(y|x)\right)  \right]  +\sum_{z}%
\langle\psi_{z}|\rho_{B}^{x}|\psi_{z}\rangle\ln\!\left(  \frac{1}%
{q(z)}\right)  \\
&  =-\left(  \ln2\right)  H(\rho_{B}^{x})+\sum_{z}\langle\psi_{z}|\rho_{B}%
^{x}|\psi_{z}\rangle\ln\!\left(  \frac{1}{q(z)}\right)  .
\end{align}
Then we find that%
\begin{align}
I(X;B)_{\rho} &  =\sum_{x\in\mathcal{X}}r(x)\left[  -H(\rho_{B}^{x})+\sum
_{z}\langle\psi_{z}|\rho_{B}^{x}|\psi_{z}\rangle\log_{2}\!\left(  \frac
{1}{q(z)}\right)  \right]  \\
&  =-\sum_{x\in\mathcal{X}}r(x)H(\rho_{B}^{x})+\sum_{z}\langle\psi_{z}|\left[
\sum_{x}r(x)\rho_{B}^{x}\right]  |\psi_{z}\rangle\log_{2}\!\left(  \frac
{1}{q(z)}\right)  \\
&  =-\sum_{x\in\mathcal{X}}r(x)H(\rho_{B}^{x})+\sum_{z}\langle\psi_{z}%
|\rho_{B}|\psi_{z}\rangle\log_{2}\!\left(  \frac{1}{q(z)}\right)  \\
&  =-\sum_{x\in\mathcal{X}}r(x)H(\rho_{B}^{x})+\sum_{z}q(z)\log_{2}\!\left(
\frac{1}{q(z)}\right)  \\
&  =-\sum_{x\in\mathcal{X}}r(x)H(\rho_{B}^{x})+H(\rho_{B}).
\end{align}
So finally%
\begin{equation}
H(B)_{\rho}-I(X;B)_{\rho}=\sum_{x\in\mathcal{X}}r(x)H(\rho_{B}^{x}),
\end{equation}
as expected.

Now, it is known from \cite{K11}\ that the following limit holds%
\begin{equation}
\lim_{k\rightarrow\infty}H(B|X)_{\rho^{k}}=H(B|X)_{\rho}%
,\label{eq:limit-inf-dim}%
\end{equation}
where%
\begin{equation}
\rho_{XB}^{k}:=\mathcal{P}_{X}^{k}(\rho_{XB}):=\Pi_{X}^{k}\rho_{XB}\Pi_{X}%
^{k}+\frac{\Pi_{X}^{k}%
}{\operatorname{Tr}[\Pi_{X}^{k}]}\otimes 
\operatorname{Tr}_{X}[(I_{X}-\Pi_{X}^{k})\rho_{XB}],\label{eq:project-channel}%
\end{equation}
and $\left\{  \Pi_{X}^{k}\right\}  _{k}$ is a sequence of finite-dimensional
projections strongly converging to the identity. Then by taking the projection
$\Pi_{X}^{k}:=\sum_{x=1}^{k}|x\rangle\langle x|_{X}$, we find from
\eqref{eq:TD-assumption}\ and data processing for normalized trace distance
with respect to the channel defined in \eqref{eq:project-channel} that%
\begin{equation}
\varepsilon\geq\frac{1}{2}\left\Vert \rho_{XB}^{k}-\sigma_{XB}^{k}\right\Vert
_{1},
\end{equation}
where $\sigma_{XB}^{k}:=\mathcal{P}_{X}^{k}(\sigma_{XB})$. Now applying the
uniform continuity bound from Proposition~\ref{prop:main} to the
finite-dimensional states $\rho_{XB}^{k}$ and $\sigma_{XB}^{k}$, we arrive at
the following inequality holding for all $k\in\mathbb{N}$:%
\begin{equation}
\left\vert H(B|X)_{\rho^{k}}-H(B|X)_{\sigma^{k}}\right\vert \leq
\varepsilon\log_{2}(d_{B}-1)+h_{2}(\varepsilon)
\end{equation}
Finally applying the limit in \eqref{eq:limit-inf-dim}, we arrive at the
statement of the corollary.
\end{proof}

\bigskip

Two intriguing questions remain about continuity of conditional entropy. The
first is whether the following inequality could hold%
\begin{equation}
\left\vert H(X|B)_{\rho}-H(X|B)_{\sigma}\right\vert \overset{?}{\leq
}\varepsilon\log_{2}(d_{X}-1)+h_{2}(\varepsilon),
\end{equation}
where $\rho_{XB}$ and $\sigma_{XB}$ are the same classical--quantum states
from \eqref{eq:cq1} (with the systems in the conditional entropy flipped, we
could call these states \textquotedblleft quantum--classical\textquotedblright%
\ now). The other question is whether the following inequality could hold for
fully quantum states $\rho_{AB}$ and $\sigma_{AB}$ that satisfy $\frac{1}%
{2}\left\Vert \rho_{AB}-\sigma_{AB}\right\Vert _{1}\leq\varepsilon$ where
$\varepsilon\in(0,1-1/d_{A}^{2}]$:%
\begin{equation}
\left\vert H(A|B)_{\rho}-H(A|B)_{\sigma}\right\vert \overset{?}{\leq
}\varepsilon\log_{2}(d_{A}^{2}-1)+h_{2}(\varepsilon).
\end{equation}
This inequality is saturated by an example given in Remark~3 of
\cite{Winter15}. These questions were raised during the open problems session
at the workshop \textquotedblleft Algebraic and Statistical ways into Quantum
Resource Theories,\textquotedblright\ held in Banff, Canada during July 2019.
It seems that solving them requires techniques beyond what is currently known.

I acknowledge support from the National Science Foundation under Grant no.~1714215. I am grateful to an anonymous referee for correcting an error and a typo in a previous version of the manuscript.


\end{document}